\newcommand\NoThen{\renewcommand\algorithmicthen{}}
\definecolor{Darkgray}{gray}{0}
\newtheorem{proposition}{\bf Proposition}
\newcommand{\ve}[1]{\boldsymbol{#1}}
\newcolumntype{I}{!{\vrule width 1.2pt}}
\def\hlinewd#1{%
\noalign{\ifnum0=`}\fi\hrule \@height #1 %
\futurelet\reserved@a\@xhline}
\begin{document}
\title{An Incentive-compatible Energy Trading Framework for Neighborhood Area Networks with Shared Energy Storage}

\author{Chathurika~P.~Mediwaththe,~\IEEEmembership{Member,~IEEE,} Marnie Shaw, Saman Halgamuge,~\IEEEmembership{Fellow,~IEEE}, David B. Smith,~\IEEEmembership{Member,~IEEE,} and Paul Scott%
\thanks{C. P. Mediwaththe, M. Shaw, S. Halgamuge, and P.~Scott are with the Australian National University, Canberra, ACT 0200, Australia.
e-mail: (chathurika.mediwaththe, marnie.shaw, saman.halgamuge, paul.scott@anu.edu.au).}
\thanks{S. Halgamuge is also with the University of Melbourne, Parkville, VIC 3010, Australia.}
\thanks{D. B. Smith is with Data61, CSIRO, Eveleigh, NSW 2015, Australia. e-mail: (david.smith@data61.csiro.au). }}

\markboth{ IEEE Transactions on Sustainable Energy}%
{ IEEE Transactions on Sustainable Energy}

\maketitle
\begin{abstract}
Here, a novel energy trading system is proposed for demand-side management of a neighborhood area network (NAN) consisting of a shared energy storage (SES) provider, users with non-dispatchable energy generation, and an electricity retailer. In a leader-follower Stackelberg game, the SES provider first maximizes their revenue by setting a price signal and trading energy with the grid. Then, by following the SES provider's actions, the retailer minimizes social cost for the users, i.e., the sum of the total users' cost when they interact with the SES and the total cost for supplying grid energy to the users. A pricing strategy, which incorporates mechanism design, is proposed to make the system incentive-compatible by rewarding users who disclose true energy usage information. A unique Stackelberg equilibrium is achieved where the SES provider's revenue is maximized and the user-level social cost is minimized, which also rewards the retailer. A case study with realistic energy demand and generation data demonstrates 28\%~-~45\% peak demand reduction of the NAN, depending on the number of participating users, compared to a system without SES. Simulation results confirm that the retailer can also benefit financially, in addition to the SES provider and the users.
\end{abstract}

\begin{IEEEkeywords}
Demand-side management, game theory, mechanism design, neighborhood area network, non-dispatchable energy generation, shared energy storage.
\end{IEEEkeywords}

\IEEEpeerreviewmaketitle

\section{Introduction}
Electricity demand-side management with distributed energy resources helps accommodate peak electricity demand without the need for upgrading conventional power grid infrastructure. In particular, shared energy storage (SES) systems such as community energy storages \cite{CES_owner} can be effectively used to exploit user-owned non-dispatchable energy generation, for example, rooftop solar or wind energy generation, to regulate users' peak electricity demand. With this capability, SESs would enable electricity retailers to effectively serve the energy needs of neighborhood area networks (NANs) that consist of small groups of residential buildings, for example, 50 households. In addition, SESs can facilitate user-centric demand-side management solutions without end-users having to invest in personal energy storage devices. Increasing interest in SES systems has led to a range of projects worldwide exploring their feasibility for utilizing user-owned non-dispatchable energy generation for demand-side management \cite{Anatolia}.

User-centric demand-side management driven by electricity retailers requires the knowledge of true energy usage information of users, including their energy demand and generation, for robust operation. However, users might intentionally misreport private information so as to gain personal cost benefits that would challenge achieving system-wide objectives in demand-side management \cite{cheating_cus}. Hence, smart pricing strategies capable of motivating users to reveal true private information are imperative for effective demand-side management \cite{samadi1, Ma}.

In this paper, a novel energy trading system is proposed for demand-side management of a NAN that consists of an SES provider, users with non-dispatchable energy generation, and an electricity retailer. The SES is used to store surplus energy from users' non-dispatchable generation and discharged when electricity demand is high. The energy transactions between the SES and the users as well as the energy transactions between the power grid and both the users and the SES are coordinated by the retailer. The interplay between the SES provider and the retailer is formulated as a non-cooperative Stackelberg game where the SES provider (leader) moves first to maximize revenue by setting a price signal and trading energy with the grid through the retailer. Then, by following the SES provider's actions, the retailer (follower) minimizes social cost for the users and determines energy amounts for the users to trade with the grid and the SES. Here, the social cost is defined as the sum of the total users' cost  when they interact with the SES and the cost incurred by the retailer in supplying grid energy to the users. Insights from the Vickrey-Clarke-Groves (VCG) mechanism are applied to the retailer's social cost minimization problem to develop a pricing strategy that can motivate users to reveal true energy usage information to the retailer. This paper has the following contributions:
\begin{itemize}
\item The Stackelberg game has a unique pure strategy equilibrium where the SES provider maximizes revenue and the retailer minimizes the social cost of participating users.
\item The system is incentive-compatible as participating users can only minimize their personal energy costs by revealing their true energy usage information.
\item  In addition to providing financial benefits to the SES provider and the participating users, our solution also enables the retailer to benefit financially by coordinating the energy transactions between the grid and the system.
\end{itemize}
Mechanism design has been widely applied to explore demand-side management that achieves socially desirable outcomes while ensuring incentive-compatibility for users. For instance, mechanism design has been utilized in \cite{Nekouei} to develop a pricing strategy that rewards users who reveal true private information in a load curtailment scheme that minimizes aggregate user inconvenience due to load curtailment. In \cite{samadi1}, mechanism design has been used to develop a demand-scheduling program that minimizes the aggregate user cost and provides benefits to users who reveal true energy usage information. To the best of our knowledge, this paper is the first to leverage mechanism design with Stackelberg game theory to study the feasibility of integrating an SES system with user-owned non-dispatchable energy generation for demand-side management of a NAN by minimizing social energy cost and assuring incentive-compatibility for the users. 

This work has three key differences to our previous work \cite{chathurika2}. First, in contrast to allowing users to minimize personal costs selfishly, the proposed system minimizes total users' energy costs from a social planner's perspective. Second, the users in the proposed system interact with the SES provider through the retailer as a result of minimizing the social cost whereas users in the system in \cite{chathurika2} directly interact with the SES provider. In addition, the solution here is capable of benefiting the retailer for coordinating grid energy with the users and the SES unlike the solution presented in \cite{chathurika2}.

The rest of the paper is organized as follows. Section \ref{sec:2} presents related work, and Section \ref{sec:3} describes system models of the energy trading system. Section \ref{sec:4} explains the formulation of the energy trading system, and Section \ref{sec:5} presents simulation results. Section \ref{sec:6} concludes the paper.
\section{Related Work}\label{sec:2}
The concept of sharing energy storage systems for demand-side management has gained attention in both industry and research communities \cite{Bale,Paridari,Tushar_Joint, EnergyBuildings,Anatolia,TusharDavid}. For instance, sharing household-distributed energy storages through joint-ownership between domestic users and network operators has been explored in \cite{Bale} to facilitate demand response. Sharing user-owned energy storages for demand response has been studied in \cite{Tushar_Joint} while demonstrating an optimal policy to determine the shareable energy storage capacity to network operators. Various algorithms based on mixed-integer linear programming have been proposed for optimal scheduling of electric appliances of users using SES systems \cite{Paridari, EnergyBuildings}. 

Mechanism design and game theory have been widely used to investigate energy management problems, including demand-side management, with strategic user behavior \cite{NLiu,Mhanna,Low,Ma,MohensianGT,Atzeni,HillDong}. Utilizing auction mechanisms with block-chain technology has gained growing interest in peer-to-peer energy trading markets, including energy trading among electric vehicles, to elicit true local information from users \cite{blockchain}. An economically efficient pricing strategy for wholesale electricity markets is proposed in \cite{Low} using the VCG mechanism. Mechanism design has been used to develop a pricing strategy in \cite{Ma} that effectively motivates users to report true energy information to the electricity retailer in an energy consumption scheduling game among users. In contrast, this paper applies insights of mechanism design in a different demand-side management setting where users with non-dispatchable energy generation trade energy with an SES. In the literature, combining incentive compatibility for users in social cost minimization generally results in bi-level structures where users, at the second level, react to the decisions taken by the social planner \cite{Ma,Low,Mhanna,samadi1}. However, in this paper, combining incentive compatibility for users and a Stackelberg game between the retailer and the SES provider leads to a tri-level structure where users, at the third level, implicitly react to the decisions taken in the Stackelberg game.

\begin{table*}[t]
\centering
\caption{table of notation.}
  \begin{tabular}{|m{2 cm}|m{6 cm}||m{2 cm}|m{6 cm}|}
 \hline
Variable/parameter & Definition & Variable/parameter & Definition  \\
  \hline\hline
  
  $\mathcal{A}$ & Set of all users in the NAN. & $\mathcal{P}^{+}(t)$   & Set of surplus energy users at time $t$.  \\
  $\mathcal{P}$ & Set of participating users. & $\mathcal{P}^{-}(t)$ & Set of deficit energy users at time $t$. \\
  
   $\mathcal{N}$ & Set of non-participating users & $e_n(t)$ & Energy traded by user $n$ with the grid at time $t$.  \\
    
  $\mathcal{T}$ &  Entire time period of analysis. & $s_n(t)$ & Surplus energy of user $n$ at time $t$.\\
    
    $H$ & Total number of time steps in $\mathcal{T}$. & $e_s(t)$ & Energy exchanged between the grid and the SES at time $t$. \\
    
   $I$ & Total number of users in $\mathcal{P}$.  & $b(t)$  & The charge level of the SES at the end of time $t$.\\
    
    $\eta^+,~\eta^-$ & Charging and discharging inefficiencies of the SES. & $E(t)$ & Total energy load on the grid at time $t$.\\
    
   $\alpha$ & Leakage rate of the SES; $(0<\alpha \leq1)$. & $E_{\mathcal{A}}(t)$  & Total grid load of the users $\mathcal{A}$ at time $t$.\\ 
    $Q_M$ & Maximum energy capacity of the SES. & $E_{\mathcal{P}}(t)$  & Total grid load of the users $\mathcal{P}$ at time $t$. \\ 
     $E_{\text{max}}$ & Maximum load that the grid can support without overloading it. & $E_{\mathcal{N}}(t)$  & Total grid load of the users $\mathcal{N}$ at time $t$.  \\
     
  $R$ & Revenue of the SES provider. & $\mathcal{E}_n(t)$   & Feasible strategy set of user $n$ at time $t$. \\
    
    $\mathcal{Q}$ & Feasible strategy set of the SES provider. & $k_n(t)$ & Payment made by user $n$ to the retailer at time $t$.\\
    
    $\mathcal{G}$ & Stackelberg game between the SES provider and the retailer. & $C_n(t)$ &  Personal energy cost of user $n$ at time $t$.  \\
    
     $\tau$ & Small positive value. & $\ve{\hat{s}}(t)$    & Declared surplus energy profile by the users $\mathcal{P}$ to the retailer at time $t$. \\
    
   $r$ & Iteration number. & $\ve{\hat{s}}_{-n}(t)$ & Declared surplus energy profile by all other users $\mathcal{P}\backslash n$ to the retailer at time $t$.  \\
  
      $\mathcal{L}$ & SES provider. & $U(t)$  &  Net payoff of the retailer at time $t$.\\
 $\mathcal{F}$ & Retailer. & $ p_g(t)$  & Unit energy price paid by the retailer to the grid at time $t$.  \\
    $d_n(t)$ & Electricity demand of user $n$ at time $t$. & $p_s(t)$    &  Unit energy price of the SES at time $t$. \\
   $g_n(t)$ & Energy generation of user $n$ at time $t$.  & $\delta_t,~\phi_t$ & Positive time-of-use tariff constants at time $t$. \\
  $x_n(t)$ & Energy traded with the SES by user $n$ at time $t$.  & $C_t$  & Social cost at time $t$. \\
 \hline
  \end{tabular}\label{notation}
\end{table*} 
\section{System Models}\label{sec:3}
The energy trading system consists of energy users, an electricity retailer, and an SES owned by a third-party that provides storage services \cite{CES_owner}, referred to as the SES provider, as depicted in Fig. \ref{fig:systemModel}. This section explains the role of each entity and the energy cost models used in the system. The definitions of notations in this paper are summarized in Table~\ref{notation}. \vspace{-0.7cm}
\begin{figure}[b!]
\centering
\vspace{-0.5cm}
\includegraphics[width=0.8\columnwidth]{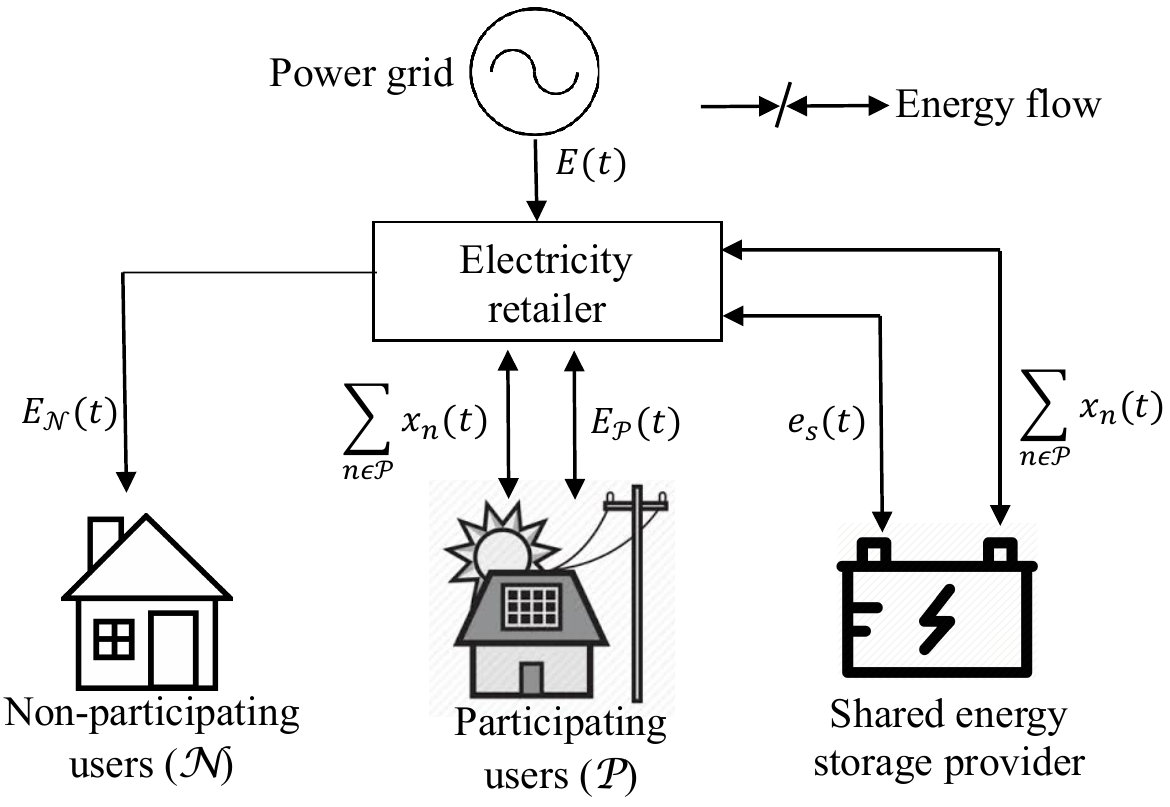}
\caption{Configuration of the energy trading system.}
\label{fig:systemModel}
\end{figure}
 \subsection{Demand-side Model}\label{sec:3-1}
The set of all users in the NAN, $\mathcal{A}$, is divided into two sets, participating users $\mathcal{P}$ and non-participating users $\mathcal{N}$. Thus, $\mathcal{A}=\mathcal{P}~\cup~\mathcal{N}$. The users $\mathcal{P}$ have non-dispatchable energy generation systems, for example, rooftop solar panels, without local energy storage facilities. The users $\mathcal{P}$ participate in the system by trading  energy with the grid and the SES through the electricity retailer. The retailer coordinates the energy transactions between the grid and both the users $\mathcal{A}$ and the SES in addition to the energy transactions between the SES and the users $\mathcal{P}$. The SES is shared among the users $\mathcal{P}$ to store surplus energy from their local energy generation. The users $\mathcal{N}$ may have non-dispatchable energy generation systems without storage and do not participate in the energy trading optimization framework. They are considered as the traditional energy users of the grid. If a user in $\mathcal{N}$ owns a local energy generation system, they sell the excess energy generated directly to the grid through the retailer. 

The time period $\mathcal{T}$, typically one day, is divided into $H$ equal time steps and the control time $t = 1,~2,\dotsm,H$. 
The users $\mathcal{P}$ are divided into two time-dependent sets; $\mathcal{P}^{\scriptstyle{+}}(t)$ and $\mathcal{P}^{\scriptstyle{-}}(t)$. Surplus energy at user $n\in \mathcal{P}$ at time $t$ is given by
\begin{align}
s_n(t)= g_n(t) - d_n(t). \label{eq:id1}
\end{align}
If $g_n(t) > d_n(t)$, i.e., when $s_n(t) > 0$, then user $n\in \mathcal{P}^{\scriptstyle{+}}(t)$. If $g_n(t) < d_n(t)$, i.e., when $s_n(t) < 0$, then user $n\in \mathcal{P}^{\scriptstyle{-}}(t)$. 

For each user $n\in \mathcal{P}$, it is assumed that there is a local energy controlling device. All these controlling devices communicate with a central controlling device at the retailer that performs the energy cost optimization on behalf of the users $\mathcal{P}$. In doing so, the retailer determines users' optimal energy amounts that are traded with the SES and the grid.

For each user $n \in \mathcal{P}$, $e_n(t) > 0$ if they buy energy from the grid, and $e_n(t) < 0$ if they sell energy to the grid. Additionally, $x_n(t) >0 $ if the user sells energy to the SES, and $x_n(t) < 0 $ if they buy energy from the SES. Then, the energy balance at user $n$ gives $e_n(t) = x_n(t) + d_n(t) - g_n(t)$. Hence, with \eqref{eq:id1}, 
\begin{align}
e_n(t)= x_n(t) - s_n(t). \label{eq:id2}
\end{align}
The optimal values for $e_n(t)$ and $x_n(t)$ are determined day-ahead by the retailer using the reported information of $s_n(t)$ by the users $\mathcal{P}$. The users $\mathcal{P}$ generate information of $s_n(t)$ using their next day's energy generation and demand forecasts, and we assume the users $\mathcal{P}$ have accurate energy forecasts \footnote{ To make the game-theoretic analysis tractable, we assume accurate energy forecasts in this paper. On that note, as an interesting future work, to handle energy generation and demand forecast errors, the game-theoretic framework introduced in Section~\ref{sec:4} can be combined with stochastic game theory with imperfect information \cite{gametheoryessentials}.}. Since the surplus energy levels of the users $\mathcal{P}$ are private information and unknown to the retailer, the reported surplus energy level by user $n\in \mathcal{P}$ is denoted by $\hat s_n(t)$ to highlight that the user can misreport this information to the retailer. Then we use $\hat x_n(t)$ and $\hat e_n(t)$ to denote the SES and grid energy transactions determined by the retailer for user $n$ at time $t$, respectively. It is considered that $0\leq \hat x_i(t) \leq \hat s_i(t),~\forall i\in \mathcal{P}^{\scriptstyle{+}}(t)$ and $\hat s_j(t)\leq \hat x_j(t) \leq 0 ,~\forall j\in \mathcal{P}^{\scriptstyle{-}}(t)$. Therefore, with \eqref{eq:id2},
\begin{equation}
\begin{split}
-\hat s_i(t)\leq \hat e_i(t) \leq 0, \quad \forall i \in \mathcal{P}^{\scriptstyle{+}}(t),~t \in \mathcal{T}, \\
0 \leq \hat e_j(t) \leq - \hat s_j(t), \quad \forall j \in \mathcal{P}^{\scriptstyle{-}}(t),~t \in \mathcal{T}. \label{eq:id3}
\end{split}
\end{equation}
\subsection{Shared Energy Storage Model}\label{sec:3-2}
Here, the storage model is similar to that in \cite{Atzeni}. In addition to being charged/discharged with the users $\mathcal{P}$, the SES may exchange energy $e_s(t)$ with the grid at time $t$ through the retailer. In this paper, $e_s(t)>0$ if the SES charges from the grid, and $ e_s(t)<0$ if it discharges energy to the grid.

Consider separating $\hat x_n(t)$ and $e_s(t)$ such that $\hat x_n(t) = \hat x_n^+(t) -\hat x_n^-(t) $ and $e_s(t) = e_s^+(t)- e_s^-(t)$. Here, $\hat x_n^+(t),~e_s^+(t)\geq 0$ are the charging energy profiles, and $\hat x_n^-(t),~e_s^-(t)\geq 0$ are the discharging energy profiles at time $t$. Given that, all optimal SES energy strategies satisfy $\hat x_n^+(t)\hat x_n^-(t) = 0$ and $e_s^+(t)e_s^-(t) =  0$ at each time $t$ to prevent simultaneous charging and discharging of the SES \cite{chathurika2}. We introduce $\eta^+$ and $\eta^-$ such that $0<\eta^+\leq1$ and $\eta^-\geq1$ to consider conversion losses of the SES. For example, if $\hat x^+$ energy is transferred to the SES, only $\eta^+\hat x^+$ energy is effectively stored. Conversely, to get $\hat x^-$ energy, the SES has to be discharged by $\eta^-\hat x^-$. If the charge level of the SES at the beginning of time $t$ is $b(t-1)$, then $b(t)$ is given by
\begin{multline}
b(t)=\alpha b(t-1)+ \eta^+\Big(e_s^+(t) + \sum_{n\in \mathcal{P}}\hat x_n^+(t)\Big)\\
- \eta^-\Big(e_s^-(t) + \sum_{n\in \mathcal{P}}\hat x_n^-(t)\Big). 
\label{eq:id4}
\end{multline}

At each time $t$, $b(t)$ is bounded above by $Q_M$ and below by $0$ \cite{Atzeni}. Therefore, 
\begin{equation}
 0\leq ~b(t)~\leq~ Q_M,~\forall t\in\mathcal{T}\label{eq:id5}
\end{equation}
where $b(t)$ is calculated using \eqref{eq:id4}. To ensure the continuous operation of the SES for the next day and to prevent over-charging or over-discharging of the SES, it is specified \cite{Chen}
\begin{equation}
b(H)=b(0) \label{eq:id6}
\end{equation} 
where $b(0)$ is the initial charge level of the SES that is considered to be within the safe operating region of the SES.
\subsection{Energy Cost Models}\label{sec:3-3}
The retailer buys electricity from the grid at cost $D_t$ at time $t$. If the retailer buys $E(t)$ amount of energy from the grid at time $t$, then $D_t$ is calculated by
\begin{equation}
 D_t = \phi_t E(t)^2+\delta_t E(t) \label{eq:id7}
\end{equation}
where $\phi_t $ and $\delta_t$ are determined according to a day-ahead market clearing process \cite{Atzeni}. The cost function in \eqref{eq:id7} can approximate piecewise linear pricing models used in current electricity markets \cite{Lambotharan} and is widely used in the smart grid literature \cite{MohensianGT, Lambotharan}. 
In the system, $E(t) = e_s(t)+ E_{\mathcal{N}}(t)+E_{\mathcal{P}}(t)$ where $E_{\mathcal{P}}(t)=\sum_{n\in \mathcal{P}} \hat e_n(t)$. Note that it is possible to exist either $E_{\mathcal{N}}(t)\geq 0$ or $E_{\mathcal{N}}(t)\leq 0$ at time $t$. $E_{\mathcal{N}}(t)< 0$ occurs if some or all users in $\mathcal{N}$ have local energy generation systems that produce more energy than the total energy demand of the users $\mathcal{N}$ at time $t$. Using \eqref{eq:id7}, the unit energy price paid by the retailer to the grid at time $t$ is given by
\begin{equation}
 p_g(t)=\phi_t E(t)+\delta_t.\label{eq:id7_1}
\end{equation}
It is considered that $E(t) > 0$ for non-negative pricing in \eqref{eq:id7_1} and assumed $E(t) < E_{\text{max}}$. Additionally, energy transmission losses are neglected within the NAN due to the short proximity between its elements \cite{loss}.
The SES provider sets a price $p_s(t)$ for the traded energy amount $\hat x_n(t)$ by each user $n \in \mathcal{P}$ and trades $e_s(t)$ with the retailer at the price $p_g(t)$.
\section{Energy Trading System}\label{sec:4}
This section explains the energy trading interaction between the SES provider and the retailer by using a non-cooperative Stackelberg game. Stackelberg games are used to explore multi-level decision-making processes. Generally, in a Stackelberg game, one player acts as a leader and moves first to select their strategies. The rest of the players are followers and react to the decisions made by the leader to maximize their profits \cite{vonmarket}. In this paper, the SES provider acts as the leader by moving first to make energy trading decisions, $(p_s(t),~e_s(t)),~\forall t\in\mathcal{T}$, and maximizes revenue. By following the SES provider's decisions, the retailer determines the grid energy allocations, $\hat e_n(t),~ \forall n\in \mathcal{P},~\forall t\in\mathcal{T}$. The solutions to the Stackelberg game are derived by using backward induction \cite{gametheoryessentials}. Thus, the actions of the retailer are derived first based on the knowledge of the  SES provider's actions. Then the analysis proceeds backwards to find the SES provider's actions.
\subsection{Objective of the Electricity Retailer}\label{sec:4-1}
To determine $\hat e_n(t),~ \forall n\in \mathcal{P},~\forall t\in\mathcal{T}$, the retailer minimizes the social cost that is the sum of the total cost for the users $\mathcal{P}$ when they interact with the SES and the cost for the retailer for supplying grid energy to the users $\mathcal{P}$. The social cost at time $t$ is given by
\begin{equation}
C_t = \sum_{n\in \mathcal{P}}\Big({-p_s(t)\hat x_n(t) + p_g(t)\hat e_n(t)}\Big). \label{eq:id8_1}
\end{equation}
By using \eqref{eq:id2} and \eqref{eq:id7_1}, \eqref{eq:id8_1} can be written as a function of $E_{\mathcal{P}} (t)$ such that
\begin{multline}
C_t( E_{\mathcal{P}} (t)) = \phi_t  E_{\mathcal{P}} (t)^2+\Big(\phi_t(E_{\mathcal{N}}(t)+e_s(t))+\delta_t\\
-p_s(t)\Big) E_{\mathcal{P}}(t)
-p_s(t)\sum_{n\in \mathcal{P}}\hat s_n(t). \label{eq:id8_2}
\end{multline}
In response to suitable $(p_s(t),~e_s(t)),~\forall t\in \mathcal{T}$ of the SES provider, the retailer minimizes \eqref{eq:id8_2} at each time $t$ as
\begin{equation}
\min_{E_{\mathcal{P}}(t)~\in~\mathcal{E}}~C_t(E_{\mathcal{P}} (t))  \label{eq:id8}
\end{equation}
where $\mathcal{E}=\prod_{n=1}^I \mathcal{E}_n(t)$, and $\mathcal{E}_n(t)$ subjects to constraints \eqref{eq:id3}. The objective function in \eqref{eq:id8} is strictly convex with respect to $E_{\mathcal{P}} (t)$. Also, its feasible strategy set $\mathcal{E}$ is convex, closed, and non-empty because it is only subject to linear constraints. Hence, \eqref{eq:id8} has a unique solution with respect to $E_{\mathcal{P}}(t)$ \cite{boyd2004convex}.

The unique optimal solution $\tilde  E_{\mathcal{P}}(t)$ in \eqref{eq:id8} for given $(p_s(t), e_s(t))$ is found by using $\frac{\partial C_t(E_{\mathcal{P}} (t))}{\partial E_{\mathcal{P}}(t)} =0$ that gives 
\begin{equation}
\tilde  E_{\mathcal{P}} (t) = \frac{1}{2}\Big(\phi_t^{-1}(p_s(t)-\delta_t)- E_{\mathcal{N}}(t)- e_s(t)\Big). \label{eq:id8a}
\end{equation}

Since $ E_{\mathcal{P}} (t)=\sum_{n\in \mathcal{P}} \hat e_n(t)$, there can be multiple combinations for the grid energy allocation of the users $\mathcal{P}$, $\ve{\hat e}(t) = (\hat e_1(t), \dotsm, \hat e_I(t))$, that satisfy \eqref{eq:id8a}. However, the elements in $\ve{\hat e}(t)$ should satisfy \eqref{eq:id3}. Hence, any tuple of $\ve{\hat e}(t)$ that satisfies both \eqref{eq:id8a} and \eqref{eq:id3} forms the optimal solution for \eqref{eq:id8}. Once $\tilde  E_{\mathcal{P}}(t)$ is found by using \eqref{eq:id8a}, the corresponding grid energy allocation of the users $\mathcal{P}$ at time $t$, $\tilde{\hat {\ve{e}}}(t)$, is found by
\begin{equation}
    \tilde {\hat e}_n(t)=\left\{
                \begin{array}{ll}
                  \frac{ \tilde E_{\mathcal{P}}(t)\hat s_n(t)}{\sum_{n\in \mathcal{P}}\hat s_n(t)},~~\text{if all}~\mathcal{P}~\text{are either surplus or deficit},\\
                  0,~~~~~~~~~~~~~~\text{if $\mathcal{P}$ has both types of users.}  
                \end{array}
              \right. \label{eq:id8b}
\end{equation}
Once $\tilde {\hat e}_n(t)$ is found, the corresponding $\tilde {\hat x}_n(t)$ can be found using \eqref{eq:id2}. Clearly, \eqref{eq:id8a} is a function of $p_s(t)$ and $ e_s(t)$ and hence, it does not guarantee that any values for $(p_s(t), e_s(t))$ would result in $\tilde {\hat e}_n(t)$ satisfying \eqref{eq:id3}. Therefore, constraints are considered in the SES provider's revenue maximization so that its selection of $(p_s(t)$, $ e_s(t))$ ensures $\tilde {\hat e}_n(t)$ in \eqref{eq:id8b} satisfies constraints \eqref{eq:id3}. The details are given in Section~\ref{sec:4-3}.

The users $\mathcal{P}$ may not disclose true surplus energy levels, and this may lead to inefficient results in the above optimization \cite{mechanismBook}. Hence, we are interested in a mechanism that motivates the users $\mathcal{P}$ to report their true information about $s_n(t)$ to the retailer or, in other words, an incentive-compatible mechanism. In game theory context, a mechanism is said to be incentive-compatible if each user can achieve minimized personal cost by being truthful in their actions \cite{mechanismBook}. To this end, we are interested in utilizing the VCG mechanism that can induce reporting true $s_n(t),~\forall t\in \mathcal{T}$ to the retailer as the dominant strategy for user $n\in \mathcal{P}$. A dominant strategy refers to a strategy where a self-interested user minimizes their personal cost regardless of other users' strategies \cite{mechanismBook}. 
\subsection{Vickrey-Clarke-Groves Mechanism for Energy Trading}\label{sec:4-2} 
The VCG mechanism is a well-known solution for obtaining local information from users in resource allocation settings \cite{mechanismBook}. In a VCG mechanism-based energy allocation setting, users are asked to reveal their energy usage information, e.g., surplus energy levels of the users $\mathcal{P}$, to determine energy prices charged to users. In particular, the payments are designed such that users are motivated to reveal local information truthfully.

At time $t$, user $n\in \mathcal{P}$ pays $k_n (t)$ to the retailer for their grid energy transaction $\hat e_n(t)$. Then, for each user $n\in \mathcal{P}$,
\begin{equation}
C_n(t) = -p_s(t)\hat x_n(t) + k_n (t) \label{eq:id9}
\end{equation}
where $-p_s(t)\hat x_n(t)$ is the cost paid to the SES provider. 

We take $\ve{s}^*(t)=(\hat{s}_1(t),\dotsm, s_n(t),\dotsm, \hat{s}_I(t))$ to denote the case when user $n$ reports their true surplus energy level. In response to any tuple $\ve{\hat{s}}(t) =  (\hat{s}_1(t),\ldots,\hat{s}_I(t))$, the retailer, by using the VCG mechanism, selects the grid energy allocation $\ve{\hat e}\big(\ve{\hat{s}}(t) \big)\equiv \ve{\hat e}(t) $ by solving \eqref{eq:id8} as described in Section \ref{sec:4-1} and structures the payments $k_n (\ve{\hat{s}}(t))\equiv k_n (t)$ as
\begin{multline}
k_n(\ve{\hat{s}}(t)) = p_g(\ve{\hat{s}}(t))\hat e_n(\ve{\hat{s}}(t)) + \sum_{m\in \mathcal{P}\backslash n}
\Big({-p_s(t)\hat x_m(\ve{\hat{s}}(t))} \\
+ p_g(\ve{\hat{s}}(t))\hat e_m(\ve{\hat{s}}(t))\Big) - h_n(\ve{\hat{s}}_{-n}(t)).\label{eq:id10}
\end{multline}
Here, $h_n(\ve{\hat{s}}_{-n}(t))$ is a function that depends on the declared surplus energy profile by all other users $ \mathcal{P}\backslash n$ to the retailer at time $t$ given by $\ve{\hat{s}}_{-n}(t) = (\hat{s}_1(t),\ldots,\hat{s}_{n-1}(t),\hat{s}_{n+1}(t),\ldots,\hat{s}_I(t))$. To determine $h_n(\ve{\hat{s}}_{-n}(t))$, we use the popular choice referred to as Clarke tax \cite{mechanismBook}. Then we take
\begin{multline}
h_n(\ve{\hat{s}}_{-n}(t)) = \sum_{m\in \mathcal{P}\backslash n}\Big(-p_s(t)\hat x_m(\ve{\hat{s}}_{-n}(t)) \\
+~p_g(\ve{\hat{s}}_{-n}(t))\hat e_m(\ve{\hat{s}}_{-n}(t))\Big).  \label{eq:id12}
\end{multline}
Here, $\hat e_m(\ve{\hat{s}}_{-n}(t))$ and $\hat x_m(\ve{\hat{s}}_{-n}(t))$ denote the grid and SES energy transactions of user $m\in \mathcal{P}\backslash n$ at time $t$, respectively, after solving \eqref{eq:id8} without user $n$ in the system. Similarly, $p_g(\ve{\hat{s}}_{-n}(t))$ denotes the grid price paid by the retailer, as per \eqref{eq:id7_1}, after solving \eqref{eq:id8} without user $n$ in the system. The case without user $n$ in the system refers to when user $n$ is neither a participating user nor a non-participating user, i.e., when $n \notin \mathcal{A}$. Thus, \eqref{eq:id12} implies $h_n(\ve{\hat{s}}_{-n}(t))$ is the social cost of the other users $\mathcal{P}\backslash n$ in \eqref{eq:id8} without user $n$ in the system. By directly following this argument, we can rewrite \eqref{eq:id10} as
\begin{equation}
k_n(\ve{\hat s}(t)) = p_g(\ve{\hat s}(t))\hat e_n(\ve{\hat s}(t)) + h_n(\ve{\hat{s}}(t)) - h_n(\ve{\hat{s}}_{-n}(t)) \label{eq:id13_1}
\end{equation}
where $h_n(\ve{\hat{s}}(t)) =\sum_{m\in \mathcal{P}\backslash n} \Big({-p_s(t)\hat x_m(\ve{\hat s}(t))} + p_g(\ve{\hat s}(t))\hat e_m(\ve{\hat s}(t))\Big)$ is the social cost of the other users $\mathcal{P}\backslash n$ in \eqref{eq:id8} with user $n$ in the system. Hence, $k_n(\ve{\hat s}(t))$ is the difference between the social costs of the other users $\mathcal{P}\backslash n$ in \eqref{eq:id8} with and without user $n$ in the system plus the retailer's cost for supplying $\hat e_n(\ve{\hat{s}}(t))$. 

It is important that when user $n\in \mathcal{P}^{\scriptstyle{+}}(t)$, they receive an income for selling energy $(\text{i.e.,}~ k_n(t) <0 )$, and when user $n\in \mathcal{P}^{\scriptstyle{-}}(t)$, they incur an expense for buying energy $(\text{i.e.,}~ k_n(t) > 0)$. Proposition~1 demonstrates the proposed mechanism yields this property. 
\begin{proposition}
The payment structure in \eqref{eq:id10} results in an income for user $n$ $(k_n(t) <0)$ when $n\in \mathcal{P}^{\scriptstyle{+}}(t)$ and an expense $(k_n(t) > 0)$ when $n\in \mathcal{P}^{\scriptstyle{-}}(t)$.
\end{proposition}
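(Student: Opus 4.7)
The plan is to work from the equivalent form \eqref{eq:id13_1} of the payment,
\[
k_n(t) = p_g(t)\,\hat e_n(t) + \bigl(h_n(\ve{\hat s}(t)) - h_n(\ve{\hat s}_{-n}(t))\bigr),
\]
and to pin down the sign of the two summands separately.

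The first summand is the easy piece. The feasibility constraints \eqref{eq:id3} give $\hat e_n(t)\le 0$ for $n\in\mathcal{P}^{+}(t)$ and $\hat e_n(t)\ge 0$ for $n\in\mathcal{P}^{-}(t)$, while the Section~\ref{sec:3-3} assumption $E(t)>0$ together with the explicit form \eqref{eq:id7_1} yields $p_g(t)>0$. Hence $p_g(t)\hat e_n(t)\le 0$ on $\mathcal{P}^{+}(t)$ and $p_g(t)\hat e_n(t)\ge 0$ on $\mathcal{P}^{-}(t)$, strictly so whenever the constraint $\hat e_n(t)=0$ is not saturated.

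The second summand is the Clarke externality. I would argue via a comparison: $h_n(\ve{\hat s}_{-n}(t))$ is the minimum of the objective in \eqref{eq:id8_2} summed over $\mathcal{P}\setminus\{n\}$, whereas $h_n(\ve{\hat s}(t))$ is the same objective evaluated at the optimizer of the full \eqref{eq:id8}. For $n\in\mathcal{P}^{+}(t)$, inserting $n$ into the pool adds $\hat s_n(t)>0$ of surplus and, through \eqref{eq:id8a}--\eqref{eq:id8b} and \eqref{eq:id7_1}, shifts $\tilde E_{\mathcal{P}}(t)$ in the direction that lowers the unit grid price $p_g(t)$ faced by the remaining users while leaving their SES trade structure qualitatively unchanged. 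I would make this precise by plugging the without-$n$ optimizer back into the with-$n$ problem, checking feasibility via \eqref{eq:id3}, and using the strict convexity of $C_t(\cdot)$ in $E_{\mathcal{P}}(t)$ combined with the quadratic form of \eqref{eq:id7} to show that the perturbation strictly decreases the objective restricted to the other users. The deficit case is symmetric and reverses the inequality.

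Combining the two signs then yields $k_n(t)<0$ for $n\in\mathcal{P}^{+}(t)$ and $k_n(t)>0$ for $n\in\mathcal{P}^{-}(t)$. The main obstacle I expect is the externality step: removing a single user shifts both the aggregate $\tilde E_{\mathcal{P}}(t)$ in \eqref{eq:id8a} and the individual shares in \eqref{eq:id8b}, and in principle these two changes could push the objective in opposite directions. The key technical step will therefore be to rewrite the gap $h_n(\ve{\hat s}_{-n}(t))-h_n(\ve{\hat s}(t))$ as a definite-sign quadratic in $\hat s_n(t)$, using the closed-form minimizer \eqref{eq:id8a} and the proportional split \eqref{eq:id8b}, from which strictness of the inequality follows.
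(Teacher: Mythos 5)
Your starting point --- the decomposition of $k_n(t)$ in \eqref{eq:id13_1} into the retailer's supply cost $p_g(t)\hat e_n(t)$ plus the Clarke externality $h_n(\ve{\hat{s}}(t)) - h_n(\ve{\hat{s}}_{-n}(t))$ --- is also where the paper begins, and your signing of the first summand is fine. The gap is in the externality step, and it is a genuine one: the mechanism you invoke (that inserting a surplus user ``shifts $\tilde E_{\mathcal{P}}(t)$ in the direction that lowers the unit grid price faced by the remaining users'') does not occur in this model. By \eqref{eq:id8a}, the optimal aggregate $\tilde E_{\mathcal{P}}(t)$ depends only on $\phi_t$, $\delta_t$, $E_{\mathcal{N}}(t)$, $p_s(t)$ and $e_s(t)$ --- not on the reported surpluses --- so it is \emph{identical} with and without user $n$, and hence $p_g(\ve{\hat{s}}_{-n}(t))=p_g(\ve{\hat{s}}(t))$. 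This invariance is precisely the observation the paper's proof rests on, and it is incompatible with your proposed price-shift argument.

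Worse, the externality term does not have a sign determined by whether $n$ is a surplus or deficit user, so the plan to establish a ``definite-sign'' expression for it cannot succeed. Writing $\hat x_m(t)=\hat e_m(t)+\hat s_m(t)$ from \eqref{eq:id2} and using the price invariance above together with $E_{\mathcal{P}\backslash n}(\ve{\hat{s}}_{-n}(t))=E_{\mathcal{P}\backslash n}(\ve{\hat{s}}(t))+\hat e_n(\ve{\hat{s}}(t))$, the gap collapses to
\begin{equation*}
h_n(\ve{\hat{s}}(t)) - h_n(\ve{\hat{s}}_{-n}(t)) = \bigl(p_s(t)-p_g(t)\bigr)\hat e_n(\ve{\hat{s}}(t)),
\end{equation*}
which is linear (not quadratic) in $\hat e_n$ and whose sign flips with $p_s(t)\gtrless p_g(t)$ --- and Proposition~3's conditions \eqref{eq:id19a} permit either ordering depending on the sign of $E_{\mathcal{A}}(t)$. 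For a surplus user with $p_s(t)<p_g(t)$ the externality is strictly positive, so a term-by-term sign argument stalls. The proposition is nevertheless true because the two summands cancel exactly into $k_n(\ve{\hat s}(t))=p_s(t)\hat e_n(\ve{\hat s}(t))$, from which the claimed signs follow immediately since $p_s(t)>0$ and $\hat e_n(t)$ has the sign dictated by \eqref{eq:id3}. In short: you need the exact cancellation, which requires first proving the invariance of $\tilde E_{\mathcal{P}}(t)$ and $p_g(t)$ under removal of user $n$, rather than a one-sided convexity or pivot argument.
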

\begin{proof}
Using \eqref{eq:id2}, the last two terms of \eqref{eq:id13_1} can be derived as
\begin{multline}
h_n(\ve{\hat{s}}(t)) - h_n(\ve{\hat{s}}_{-n}(t)) = p_s(t)\Big(E_{\mathcal{P}\backslash n}(\ve{\hat{s}}_{-n}(t)) - E_{\mathcal{P}\backslash n}(\ve{\hat{s}}(t)) \Big)\\
-\Big(p_g(\ve{\hat{s}}_{-n}(t))E_{\mathcal{P}\backslash n}(\ve{\hat{s}}_{-n}(t))-p_g(\ve{\hat s}(t))E_{\mathcal{P}\backslash n}(\ve{\hat{s}}(t))\Big) \label{eq:id14}
\end{multline}
where $E_{\mathcal{P}\backslash n}(\ve{\hat{s}}(t))=\sum_{m\in \mathcal{P}\backslash n}\hat e_m(\ve{\hat{s}}(t))$ and $E_{\mathcal{P}\backslash n}(\ve{\hat{s}}_{-n}(t))=\sum_{m\in \mathcal{P}\backslash n}\hat e_m(\ve{\hat{s}}_{-n}(t))$ are the total grid load of all other users $\mathcal{P}\backslash n$ at time $t$ with and without user $n$ in the system, respectively.
If we denote the total grid load of all participating users at time $t$ in \eqref{eq:id8} without user $n$ in the system by $ E_{\mathcal{P}}(\ve{\hat{s}}_{-n}(t))$, then $ E_{\mathcal{P}}(\ve{\hat{s}}_{-n}(t))= E_{\mathcal{P}\backslash n}(\ve{\hat{s}}_{-n}(t))$ because, in this case, user $n$ is not in the system, i.e., $n \notin \mathcal{A}$. If we denote the total grid load of all participating users at time $t$ in \eqref{eq:id8} with user $n$ in the system by $E_{\mathcal{P}}(\ve{\hat{s}}(t))$, then $ E_{\mathcal{P}}(\ve{\hat{s}}(t))=E_{\mathcal{P}\backslash n}(\ve{\hat{s}}(t)) + \hat e_n(\ve{\hat s}(t))$ since $n \in \mathcal{P}$.

At optimality in \eqref{eq:id8}, the total grid load of the participating users only depends on $\phi_t,~\delta_t,~E_{\mathcal{N}}(t),~p_s(t)$, and $e_s(t)$ (see \eqref{eq:id8a}) that are considered to be given and hence, fixed at time $t$. Hence, $ E_{\mathcal{P}}(\ve{\hat{s}}(t))= E_{\mathcal{P}}(\ve{\hat{s}}_{-n}(t))$. Then, $E(t)$ remains the same at optimality of \eqref{eq:id8} in both cases with and without user $n$ in the system. Therefore, \eqref{eq:id7_1} reflects $p_g(\ve{\hat{s}}_{-n}(t))=p_g(\ve{\hat{s}}_{n}(t))$. Thus, from \eqref{eq:id13_1} and \eqref{eq:id14},
\begin{equation}
k_n(\ve{\hat s}(t)) = p_s(t)\hat e_n(\ve{\hat s}(t)). \label{eq:id16}
\end{equation}
Since the SES price $p_s(t) > 0$, \eqref{eq:id16} implies that $k_n(t) < 0$ when user $n\in \mathcal{P}^{\scriptstyle{+}}(t)$ with $\hat e_n(t) <0$, and $k_n(t) > 0$ when user $n\in \mathcal{P}^{\scriptstyle{-}}(t)$ with $\hat e_n(t) >0$.
\end{proof}

Clearly, \eqref{eq:id16} implies that the retailer adopts the same price as the SES provider for unit grid energy traded by user $n\in \mathcal{P}$. It is assumed that the users $\mathcal{N}$ pay the same unit energy price as the users $\mathcal{P}$ for their grid energy transactions.

By using a similar explanation to Theorem 10.4.2 in \cite{mechanismBook}, we confirm, by Proposition~2, that the VCG payment structure ensures the disclosure of true surplus energy levels to the retailer is a dominant strategy for each user in $\mathcal{P}$ that minimizes their personal cost in the system.

\begin{proposition}
Revealing true surplus energy levels is a dominant-strategy for each user $n\in \mathcal{P}$ when solving \eqref{eq:id8} with the payment structure \eqref{eq:id10} for given feasible $(p_s(t), e_s(t))$.
\end{proposition}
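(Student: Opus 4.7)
The plan is to follow the classical Groves--VCG telescoping argument (cf.\ Theorem 10.4.2 in \cite{mechanismBook}). First I would substitute the Clarke-pivot payment \eqref{eq:id10} into the personal cost \eqref{eq:id9} and collect terms, which yields the identity
\[
C_n(\ve{\hat s}(t)) = C_t\bigl(\tilde E_{\mathcal{P}}(\ve{\hat s}(t))\bigr) - h_n(\ve{\hat s}_{-n}(t)),
\]
so that user $n$'s personal cost equals the retailer's minimized social cost \eqref{eq:id8} under the declared profile $\ve{\hat s}(t)$, offset by a term that depends only on the declarations of the other users in $\mathcal{P}\backslash n$.

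Next, because $h_n(\ve{\hat s}_{-n}(t))$ is defined on a profile that excludes user $n$, it is a constant with respect to $\hat s_n(t)$. Hence the minimization of $C_n$ over user $n$'s own report reduces to the minimization of $C_t\bigl(\tilde E_{\mathcal{P}}(\hat s_n(t),\ve{\hat s}_{-n}(t))\bigr)$ over $\hat s_n(t)$. I would then invoke the standard Groves observation: by construction the retailer already selects the allocation that minimizes the social cost on the feasibility set \eqref{eq:id3} parameterized by the declared surpluses, so declaring $\hat s_n(t) = s_n(t)$ makes this feasibility set coincide with user $n$'s true physical options and delivers the smallest value of $C_t$ that is implementable for user $n$. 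Any deviation either shrinks that feasible set (under-reporting) or prescribes an allocation that user $n$ cannot physically deliver (over-reporting), so it weakly raises the realized value of $C_t$, and therefore of $C_n$, which establishes dominant-strategy incentive compatibility.

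The main obstacle is making this last step airtight. The unconstrained minimizer $\tilde E_{\mathcal{P}}(t)$ in \eqref{eq:id8a} is itself independent of the declared surpluses, so the sensitivity of $C_t$ to $\hat s_n(t)$ enters only through the allocation rule \eqref{eq:id8b} and the feasibility constraints \eqref{eq:id3}. I therefore need to match the retailer's declaration-dependent feasible set to user $n$'s true physical capabilities carefully, ruling out in particular any profitable over-declaration that appears to reduce the $-p_s(t)\sum_{m}\hat s_m(t)$ term in \eqref{eq:id8_2} but is not actually implementable; this is precisely the bookkeeping that the Clarke-pivot term $h_n(\ve{\hat s}_{-n}(t))$ is designed to handle in the standard VCG proof.
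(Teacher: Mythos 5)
Your proposal is correct and follows essentially the same route as the paper's proof: both reduce user $n$'s personal cost to the full social cost \eqref{eq:id8_1} minus the report-independent Clarke term $h_n(\ve{\hat{s}}_{-n}(t))$, and then observe that the retailer is already minimizing that same objective over allocations, so truthful reporting steers the retailer toward the allocation user $n$ most prefers. The ``obstacle'' you flag --- that the declared surpluses affect the outcome only through the feasible set \eqref{eq:id3}, the allocation rule \eqref{eq:id8b}, and the $-p_s(t)\sum_{m}\hat{s}_m(t)$ term, so over-declaration must be ruled out by an implementability argument --- is a genuine subtlety, but the paper's own proof does not resolve it either; it stops at the same informal alignment-of-objectives step.
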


\begin{proof}
Consider user $n$'s problem is to choose the best strategy for $\hat s_n(t)$ that minimizes their personal cost, $C_n(\ve{\hat{s}}(t)) \equiv C_n(t)$, given in \eqref{eq:id9}. As a shorthand, denote $\ve{\hat{s}}(t)=(\hat s_n(t),~\ve{\hat s}_{-n}(t))$ and $\ve{s}^*(t)=(s_n(t),~\ve{\hat s}_{-n}(t))$. The best strategy for user $n$ is the one that solves
\begin{equation}
\min_{\hat s_n(t)} C_n(\ve{\hat{s}}(t)) \equiv \min_{\hat s_n(t)} \Big(-p_s(t)\hat x_n(\ve{\hat{s}}(t)) + k_n(\ve{\hat{s}}(t))\Big).  \label{eq:id17_1}
\end{equation}
where $k_n(\ve{\hat{s}}(t))$ is given by \eqref{eq:id10}.
Since $h_n(\ve{\hat{s}}_{-n}(t))$ does not depend on $\hat s_n(t)$, it is sufficient for user $n$ to solve
\begin{multline}
\min_{\hat s_n(t)}  \Big(-p_s(t)\hat x_n(\ve{\hat{s}}(t)) + p_g(\ve{\hat s}(t))\hat e_n(\ve{\hat s}(t))\\
+ \sum_{m\in \mathcal{P}\backslash n}
\Big({-p_s(t)\hat x_m(\ve{\hat s}(t))} + p_g(\ve{\hat s}(t))\hat e_m(\ve{\hat s}(t))\Big)\Big) \label{eq:id17}
\end{multline}
by excluding $h_n(\ve{\hat{s}}_{-n}(t))$ from \eqref{eq:id17_1}.

Given $\ve{\hat{s}}(t)$, the retailer picks a grid energy allocation $\ve{\hat e}(t) \in  \mathcal{E}$ for the users $\mathcal{P}$ by minimizing the social cost in \eqref{eq:id8_1}. Hence, by disaggregating the cost of user $n$ and the cost of other users $\mathcal{P}\backslash n$ in \eqref{eq:id8_1}, we can rewrite the retailer's problem as
\begin{multline}
\min_{\ve{\hat e}(t)}  \Big(-p_s(t)\hat x_n(\ve{\hat{s}}(t))  + p_g(\ve{\hat s}(t))\hat e_n(\ve{\hat s}(t))\\
+ \sum_{m\in \mathcal{P}\backslash n}
\Big({-p_s(t)\hat x_m(\ve{\hat s}(t))} + p_g(\ve{\hat s}(t))\hat e_m(\ve{\hat s}(t))\Big)\Big). \label{eq:id17_2}
\end{multline}

Clearly, from \eqref{eq:id17} and \eqref{eq:id17_2}, user $n$ and the retailer try to minimize the same objective function however, with respect to $\hat s_n(t)$ and $\ve{\hat e}(t)$, respectively. Hence, the best way for user $n$ to minimize their personal cost is by influencing the retailer to pick an $\ve{\hat e}(t) \in \mathcal{E}$ that minimizes the cost in \eqref{eq:id17}. This implies user $n$ would like to disclose $\hat s_n(t)$ that leads the retailer to pick an $\ve{\hat e}(t) \in \mathcal{E}$ which solves 
\begin{multline}
\min_{\ve{\hat e}(t)}  \Big(-p_s(t)\hat x_n(\ve{s}^*(t)) + p_g(\ve{s}^*(t))\hat e_n(\ve{s}^*(t))\\
+ \sum_{m\in \mathcal{P}\backslash n}
\Big({-p_s(t)\hat x_m(\ve{s}^*(t))} + p_g(\ve{s}^*(t))\hat e_m(\ve{s}^*(t))\Big)\Big). \label{eq:id17_3}
\end{multline}
Hence, user $n\in \mathcal{P}$ leads the retailer to select the choice that they most prefer by reporting true surplus energy level, i.e., $\hat s_n(t) = s_n(t)$, regardless of the reported surplus energy levels by the other users. Hence, revealing true surplus energy at each time $t$ is a dominant strategy for user $n \in \mathcal{P}$.  
\end{proof}
In the system, the retailer coordinates the energy transactions between the grid and the NAN to maintain the energy balance. Therefore, a mechanism wherein the retailer simultaneously benefits, in addition to the SES provider and the users $\mathcal{P}$, leads to a better operating and economically viable environment. The conditions in Proposition~3 assure the proposed VCG mechanism can achieve this in the system. 
\begin{proposition}
The payment \eqref{eq:id10} produces a non-negative payoff for the retailer across $\mathcal{T}$ given that at time $t \in \mathcal{T}$ 
\begin{equation}
\begin{rcases}
               p_s(t) \leq M(t),~~~\text{if}~E_{\mathcal{A}} (t) < 0, \\
               p_s(t) \geq M(t),~~~\text{otherwise.}              
\end{rcases}\label{eq:id19a}
\end{equation}
where $M(t)= \phi_t( e_s(t)+ E_{\mathcal{N}}(t)) + \delta_t$. 
\end{proposition}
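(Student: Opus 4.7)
The plan is to write out the retailer's net payoff $U(t)$ as a ledger of cash received minus cash paid, simplify it using Proposition~1, and then substitute the social-cost optimality condition \eqref{eq:id8a} to reduce the non-negativity test to a pointwise sign check. At each time $t$, the retailer collects $\sum_{n\in\mathcal{P}} k_n(t)$ from the participating users, $p_s(t)E_{\mathcal{N}}(t)$ from the non-participating users (who are billed at the same unit price as users $\mathcal{P}$, per the remark immediately following \eqref{eq:id16}), and $p_g(t)e_s(t)$ from the SES provider for the grid--SES trade, while paying the grid $D_t = p_g(t)E(t)$, an identity that follows from \eqref{eq:id7} and \eqref{eq:id7_1}. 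Applying Proposition~1 gives $\sum_{n\in\mathcal{P}} k_n(t) = p_s(t)E_{\mathcal{P}}(t)$, and expanding $E(t) = e_s(t)+E_{\mathcal{N}}(t)+E_{\mathcal{P}}(t)$ makes the $p_g(t)e_s(t)$ terms cancel, leaving $U(t) = (p_s(t) - p_g(t))E_{\mathcal{A}}(t)$ with $E_{\mathcal{A}}(t) = E_{\mathcal{P}}(t)+E_{\mathcal{N}}(t)$.

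The next step is to substitute $\tilde E_{\mathcal{P}}(t)$ from \eqref{eq:id8a} into \eqref{eq:id7_1}, which after a brief simplification collapses to $p_g(t) = \frac{1}{2}(p_s(t)+M(t))$. Hence $p_s(t)-p_g(t) = \frac{1}{2}(p_s(t)-M(t))$ and therefore $U(t) = \frac{1}{2}(p_s(t)-M(t))\,E_{\mathcal{A}}(t)$. A direct sign analysis on the two factors then reproduces \eqref{eq:id19a}: $p_s(t)\leq M(t)$ is required when $E_{\mathcal{A}}(t)<0$ and $p_s(t)\geq M(t)$ otherwise, so that $U(t)\geq 0$ pointwise. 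Summing over $t\in\mathcal{T}$ then delivers the non-negativity of the retailer's total payoff across $\mathcal{T}$ asserted in the proposition.

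The main obstacle is setting up the retailer's cash-flow ledger correctly rather than any hard analysis. The non-participating users must be billed at the VCG-induced unit price $p_s(t)$ rather than the grid price $p_g(t)$, a subtle point stated only in the single sentence following \eqref{eq:id16}; and the grid--SES transaction must appear on both sides of the ledger, so that the $p_g(t)e_s(t)$ income term and the portion of $p_g(t)E(t)$ corresponding to $e_s(t)$ visibly cancel. Once these two items are in place, the remainder is pure algebra: Proposition~1 collapses the user payments to $p_s(t)E_{\mathcal{P}}(t)$, and \eqref{eq:id8a} makes $p_g(t)$ affine in $p_s(t)$, after which the inequalities in \eqref{eq:id19a} are immediate.
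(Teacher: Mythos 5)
Your proposal is correct and follows essentially the same route as the paper: it establishes the payoff identity \eqref{eq:id20}, substitutes the optimality condition \eqref{eq:id8a} into \eqref{eq:id7_1} to turn the comparison of $p_s(t)$ with $p_g(t)$ into a comparison with $M(t)$, and finishes with a pointwise sign check. The only differences are cosmetic: you derive \eqref{eq:id20} explicitly from the cash-flow ledger (the paper asserts it), and your identity $p_g(t)=\tfrac{1}{2}\bigl(p_s(t)+M(t)\bigr)$ lets you dispense with the paper's three-case split over surplus/deficit/mixed user sets, all of which reduce to the same sign condition on $E_{\mathcal{A}}(t)$.
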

\begin{proof}
At time $t \in \mathcal{T}$, the set $\mathcal{P}$ can have either only surplus users, only deficit users, or both types of users. Hence, we consider all these three different cases in the proof. At time $t$, the retailer buys energy from the grid at $p_g(t)$ in \eqref{eq:id7_1}, and the users $\mathcal{A}$ trade grid energy with the retailer at $p_s(t)$ as shown with Proposition~1. However, the SES provider trades $e_s(t)$ at the same grid price $p_g(t)$ and hence, the retailer receives zero payoffs from the SES provider's grid energy transactions. Therefore, the retailer's net payoff at time $t$ is given by
\begin{equation}
U (t) = \big( p_s(t)-p_g(t)\big) E_{\mathcal{A}} (t). \label{eq:id20}
\end{equation}
Note that, here, we focus on the temporal change of sign, i.e., plus or minus, of $E_{\mathcal{P}}(t)$ due to different types of users in $\mathcal{P}$.

First, assume time $t$ has only surplus users in $\mathcal{P}$. Then, $\hat e_n(t) \leq 0,~\forall n \in \mathcal{P}$ and hence, $ E_{\mathcal{P}}(t) \leq 0$.  Thus, in this case, it is possible to exist either $E_{\mathcal{A}} (t) < 0$ or $E_{\mathcal{A}} (t) > 0$ because $E_{\mathcal{A}}(t) = E_{\mathcal{P}}(t) + E_{\mathcal{N}}(t)$ and $E_{\mathcal{N}}(t)\geq 0$ or $E_{\mathcal{N}}(t)\leq 0$. If $E_{\mathcal{A}} (t) < 0$, then $ p_s(t) \leq p_g(t)$ results in a non-negative payoff in \eqref{eq:id20}. If $E_{\mathcal{A}} (t) > 0$, then $ p_s(t) \geq p_g(t)$ gives a non-negative payoff in \eqref{eq:id20}. Given $p_g(t) = \phi_t( \tilde E_{\mathcal{P}} (t) +E_{\mathcal{N}}(t) + e_s(t)) + \delta_t$ where $ \tilde E_{\mathcal{P}} (t)$ is given by \eqref{eq:id8a}, $p_s(t) \leq  p_g(t)$ gives $p_s(t) \leq M(t)$. Similarly, $p_s(t) \geq p_g(t)$ results in $p_s(t) \geq M(t)$.

Now, assume time $t$ has only deficit users in $\mathcal{P}$. Then, $\hat e_n(t) \geq 0,~\forall n \in \mathcal{P}$ and hence, $  E_{\mathcal{P}}(t) \geq 0$. Hence, with similar arguments to the previous case, it is possible to exist either $E_{\mathcal{A}} (t) < 0$ or $E_{\mathcal{A}} (t) > 0$. If $E_{\mathcal{A}} (t) > 0$, then $ p_s(t) \geq p_g(t)$ (or $p_s(t) \geq M(t)$) gives a non-negative payoff in \eqref{eq:id20}, and if $E_{\mathcal{A}} (t) < 0$, then $ p_s(t) \leq p_g(t)$ (or $p_s(t) \leq M(t)$) gives a non-negative payoff in \eqref{eq:id20}. 

Finally, assume time $t$ has both surplus and deficit users in $\mathcal{P}$. According to \eqref{eq:id8b}, $\hat e_n(t) = 0,~\forall n \in \mathcal{P}$ that implies the payoff in \eqref{eq:id20} is equal to $\big( p_s(t)- p_g(t)\big)E_{\mathcal{N}}(t)$. If $E_{\mathcal{N}} (t) \geq  0$, then $ p_s(t) \geq p_g(t)$ (or $p_s(t) \geq M(t)$) gives a non-negative payoff in \eqref{eq:id20}, and if $E_{\mathcal{N}} (t) \leq 0$, then $ p_s(t) \leq p_g(t)$ (or $p_s(t) \leq M(t)$) gives a non-negative payoff in \eqref{eq:id20}. 

Under these conditions, the sum of $U(t)$ across $\mathcal{T}$ leads to a non-negative payoff for the retailer.
\end{proof}
Therefore, to obtain a non-negative payoff for the retailer, the SES provider's selection of $e_s(t)$ and $p_s(t)$ can be adjusted such that the conditions in \eqref{eq:id19a} are satisfied.
\subsection{Objective of the Shared Energy Storage Provider}\label{sec:4-3}
The SES provider maximizes revenue $R$ by trading energy with the users $\mathcal{P}$ and the grid through the retailer. Then
\begin{equation}
R=\sum_{t=1}^H {\left(-p_s(t)\sum_{n\in \mathcal{P}} \hat x_n(t)-p_g(t)e_s(t)\right)}. \label{eq:id22}
\end{equation}
As per backward induction, by substituting the retailer's strategy \eqref{eq:id8a} in \eqref{eq:id22}, the SES provider's revenue maximization is given by 
\begin{equation}
\max_{\ve{\rho}\in\mathcal{Q}}{\sum_{t=1}^H (\lambda p_s(t)^2+\mu p_s(t)+\nu e_s(t)^2+\xi e_s(t))} \label{eq:id23}
\end{equation}
where $\ve{\rho} = (\ve{p_s},~\ve{ e_s})$ with $\ve{p_s} = (p_s(1),\dotsm, p_s(H))^T$ and $\ve{ e_s} = (e_s(1),\dotsm,e_s(H))^T$. Additionally, $\lambda = -\frac{1}{2}\phi_t^{-1}$, $\mu = (\frac{1}{2}( E_{\mathcal{N}}(t)+\phi_t^{-1}\delta_t)-\sum_{n\in \mathcal{P}}\hat s_n(t))$, $\nu=-\frac{1}{2}\phi_t$, and $\xi =-\frac{1}{2}(\phi_t E_{\mathcal{N}}(t)+\delta_t)$. As recognized in Section \ref{sec:4-1}, to ensure $\tilde {\hat e}_n(t)$ in \eqref{eq:id8b} satisfies \eqref{eq:id3}, the SES provider selects $p_s(t)$ and $e_s(t)$ at time $t$ such that
\begin{equation}
\begin{rcases}
               \sum_{n\in \mathcal{P}} - \hat s_n(t)\leq \tilde{E}_{\mathcal{P}} (t)\leq 0,~~~\text{if all}~\mathcal{P}~\text{are surplus}, \\
                0\leq \tilde{E}_{\mathcal{P}} (t)\leq \sum_{n\in \mathcal{P}} - \hat s_n(t),~~~\text{if all}~\mathcal{P}~\text{are deficit},\\
                \tilde{E}_{\mathcal{P}}(t)=0,~~~~~~~~~~~~~~~~~~~~\text{if}~\mathcal{P}~\text{has both types of users}
\end{rcases}\label{eq:id24}
\end{equation}
where $\tilde{E}_{\mathcal{P}} (t)$ is given by \eqref{eq:id8a}. Thus, the SES provider's feasible strategy set $\mathcal{Q}$ is subject to \eqref{eq:id5}, \eqref{eq:id6}, \eqref{eq:id19a}, and \eqref{eq:id24}. Note that \eqref{eq:id23} is a convex optimization problem that can be solved by using optimization algorithms such as the interior point algorithm \cite{boyd2004convex}. It also has a unique solution because the objective function is strictly convex with respect to $\ve{\rho}$, and $\mathcal{Q}$ is non-empty, closed, and convex due to linear constraints\cite{boyd2004convex}.
\subsection{Non-cooperative Stackelberg Game}
In this system, the Stackelberg game between the SES provider and the retailer can be given by the strategic form of $\mathcal{G}\equiv\langle \{\mathcal{L},\mathcal{F}\},\{\mathcal{Q},\mathcal{E}\},\{R,C_t\}\rangle$
where the SES provider $\mathcal{L}$ is the leader and the retailer $\mathcal{F}$ is the follower.  

\begin{proposition}
The game $\mathcal{G}$ has a unique pure strategy Stackelberg equilibrium.
\end{proposition}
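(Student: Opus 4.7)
The plan is to prove uniqueness of the Stackelberg equilibrium by a standard backward-induction argument, leveraging the convex-analytic properties of each level that the excerpt has already established.

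First, I would formalize the backward-induction decomposition. Given any admissible leader strategy $\ve{\rho}=(\ve{p_s},\ve{e_s})\in\mathcal{Q}$, the follower's best response is determined by the per-time-step problem \eqref{eq:id8}. The excerpt already shows that $C_t(E_{\mathcal{P}}(t))$ in \eqref{eq:id8_2} is strictly convex in $E_{\mathcal{P}}(t)$ and that $\mathcal{E}$ is convex, closed, and non-empty, so there is a \emph{unique} optimizer $\tilde E_{\mathcal{P}}(t)$ given in closed form by \eqref{eq:id8a}. The allocation rule \eqref{eq:id8b} then yields a unique $\tilde{\hat{\ve e}}(t)$, because the constraints \eqref{eq:id24} the leader is forced to respect guarantee that the prescribed proportional split lies in the feasible set \eqref{eq:id3}. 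Hence the follower's best-response correspondence is a singleton-valued function of $\ve{\rho}$.

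Second, I would substitute this unique best response into the leader's payoff. This is exactly the reduction carried out to obtain \eqref{eq:id23}. It now remains to show that the reduced leader problem has a unique optimizer over $\mathcal{Q}$. The objective in \eqref{eq:id23} is a separable quadratic in $(\ve{p_s},\ve{e_s})$ with coefficients $\lambda=-\tfrac12\phi_t^{-1}<0$ and $\nu=-\tfrac12\phi_t<0$, so after negation for maximization the Hessian is diagonal with strictly positive entries; thus the objective is strictly concave in $\ve{\rho}$. The feasible set $\mathcal{Q}$ is the intersection of the linear constraints \eqref{eq:id5}, \eqref{eq:id6}, \eqref{eq:id19a}, and \eqref{eq:id24} (the last two being linear in $(p_s(t),e_s(t))$ once \eqref{eq:id8a} is substituted), so $\mathcal{Q}$ is convex, closed, and, under the standing feasibility assumption, non-empty. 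Strict concavity on a non-empty closed convex set guarantees a unique maximizer $\ve{\rho}^\star$, as already noted after \eqref{eq:id24}.

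Finally, I would assemble the two pieces: the pair $(\ve{\rho}^\star,\tilde{\hat{\ve e}}(\ve{\rho}^\star))$ constitutes a Stackelberg equilibrium of $\mathcal{G}\equiv\langle\{\mathcal{L},\mathcal{F}\},\{\mathcal{Q},\mathcal{E}\},\{R,C_t\}\rangle$, since by construction the follower plays its unique best response to $\ve{\rho}^\star$ and the leader anticipates this in selecting $\ve{\rho}^\star$. Uniqueness follows because any other equilibrium $(\ve{\rho}',\ve{\hat e}')$ would force $\ve{\rho}'$ to solve \eqref{eq:id23}, contradicting uniqueness of its maximizer, and then $\ve{\hat e}'$ would have to coincide with $\tilde{\hat{\ve e}}(\ve{\rho}')$ by uniqueness of the follower's response.

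The main obstacle I anticipate is not the convexity bookkeeping, which is routine, but verifying carefully that $\mathcal{Q}$ remains non-empty after imposing the coupled constraints \eqref{eq:id19a} and \eqref{eq:id24} together with the storage dynamics \eqref{eq:id5}--\eqref{eq:id6}; one must argue, for example by exhibiting a trivial strategy such as $e_s(t)=0$ with $p_s(t)$ chosen as the boundary value $M(t)$ (which satisfies \eqref{eq:id19a} and, by appropriate choice, \eqref{eq:id24}), that a feasible point exists so that the unique maximizer is genuinely attained rather than defined vacuously.
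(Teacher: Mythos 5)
Your proposal is correct and follows essentially the same route as the paper: backward induction, with uniqueness of the follower's optimizer $\tilde E_{\mathcal{P}}(t)$ from strict convexity of \eqref{eq:id8_2} and uniqueness of the leader's optimizer of \eqref{eq:id23} over the convex set $\mathcal{Q}$. Your added care about non-emptiness of $\mathcal{Q}$ and about the allocation rule \eqref{eq:id8b} pinning down a unique $\tilde{\hat{\ve e}}(t)$ goes slightly beyond the paper's terser argument (which states the equilibrium at the aggregate level $\ve{E}^*_{\mathcal{P}}$ and simply asserts $\mathcal{Q}$ is non-empty), but the substance is the same.
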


\begin{proof}
In the system, the energy cost minimization \eqref{eq:id8} has a unique solution for given $(p_s(t), e_s(t))$ and that is given by $\tilde  E_{\mathcal{P}} (t)$ (Section~\ref{sec:4-1}). Given $\tilde  E_{\mathcal{P}} (t)$, the optimization problem of the SES provider in \eqref{eq:id23} also has a unique solution (Section~\ref{sec:4-3}). Therefore, according to backward induction, the game $\mathcal{G}$ has a unique pure strategy Stackelberg equilibrium
$(\ve{\rho^*},~\ve{E}^*_{\mathcal{P}})$ where $\ve{E}^*_{\mathcal{P}} = (E^*_{\mathcal{P}}(1),\dotsm,E^*_{\mathcal{P}}(H))$, and $E^*_{\mathcal{P}}(t)$ is found by substituting the $t^{\text{th}}$ element of the optimal solution of \eqref{eq:id23}, i.e., $ \ve{\rho^*}(t) = (p_s^*(t),~ e_s^*(t))$, in \eqref{eq:id8a} \cite{gametheoryessentials}.
\end{proof}

The equilibrium $(\ve{\rho^*},~\ve{E}^*_{\mathcal{P}})$ of the game $\mathcal{G}$ satisfies
\begin{gather}
C_t( E_{\mathcal{P}}^*(t),\ve{\rho^*}) \leq C_t( E_{\mathcal{P}}(t), \ve{\rho^*}),~\forall  E_{\mathcal{P}}(t)\in \mathcal{E},~\forall t\in \mathcal{T} ,\label{eq:id25}
\end{gather}
 \begin{equation}
 \begin{split}
R(\ve{E}^*_{\mathcal{P}},\ve{\rho^*}) \geq R(\ve{E}^*_{\mathcal{P}},\ve{\rho}),~\forall \ve{\rho} \in \mathcal{Q}.
\end{split}\label{eq:id26}
\end{equation}

As shown in Algorithm~\ref{Algo1}, a two-step iterative algorithm, similar to the one in \cite{chathurika2} that was shown to converge, was developed to obtain the Stackelberg equilibrium. 

\begin{algorithm}
\caption{Algorithm to obtain the Stackelberg equilibrium}\label{Algo1}
\begin{algorithmic}[1]
\renewcommand{\algorithmicrequire}{\textbf{Step 1:}}
\renewcommand{\algorithmicensure}{\textbf{Step 2:}}
\REQUIRE
\STATE $r \leftarrow  1$.
\NoThen
\IF{$r=1$}
\STATE{The SES provider selects a feasible starting point for $\ve{\rho}$ and communicates it to the retailer.}
\ELSE
\STATE The SES provider maximizes \eqref{eq:id22} subject to \eqref{eq:id5}, \eqref{eq:id6}, \eqref{eq:id19a}, and \eqref{eq:id24} using $\tilde{E}_{\mathcal{P}} (t),~\sum_{n\in \mathcal{P}} \tilde {\hat x}_n(t),~\forall t\in \mathcal{T}$ and communicates $\ve{\rho}$ to the retailer.
\ENDIF
\ENSURE
\STATE The retailer determines $(\tilde {\hat e}_n(t),~\tilde {\hat x}_n(t)),~\forall n\in \mathcal{P},~\forall t\in \mathcal{T}$ using $\ve{\rho}$, \eqref{eq:id8a}, \eqref{eq:id8b}, and \eqref{eq:id2} and announces $\tilde{E}_{\mathcal{P}} (t),~\sum_{n\in \mathcal{P}} \tilde {\hat x}_n(t),~\forall t\in \mathcal{T}$ back to the SES provider.
\STATE $r \leftarrow r+1$. 
\STATE Repeat from 2 until  ${\|\ve{\rho}^{(r)}-\ve{\rho}^{(r-1)}\|}_2/{\|\ve{\rho}^{(r)}\|}_2 \leq\tau$. Here, $\ve{\rho}^{(r)}$ refers to $\ve{\rho}$ calculated at iteration $r$.
\STATE Return $\tilde{E}_{\mathcal{P}} (t),~\forall t\in \mathcal{T}$ and $\ve{\rho}$ as the equilibrium.
\end{algorithmic}
\end{algorithm}

\section{Results and discussion}\label{sec:5}
To numerically analyze the performance of the proposed system, a residential community of 40 users with photovoltaic (PV) power generation is considered. The demand profiles of the users $\mathcal{P}$ are varied among the users such that their average demand profile is equal to the average daily domestic electricity demand profile in the Western Power Network in Australia on a typical spring day \cite{WPNdata}. For example, when there are 10 participating users, demand profiles of 5 users are generated by scaling down the demand profile in \cite{WPNdata} with scaling factors changing from 0.84 to 0.92 in 0.02 steps. For the other 5 users, the demand profiles are generated by scaling up the demand profile in \cite{WPNdata} with scaling factors changing from 1.08 to 1.16 in 0.02 steps. For each user in $\mathcal{N}$, the same demand profile as the one in \cite{WPNdata} is assigned. For each user in $\mathcal{P}$, PV power generation profile is the same as the average domestic PV power generation profile in the Western Power Network on a typical spring day \cite{WPNdata}. The users $\mathcal{N}$ are considered to be energy users without PV power generation capabilities. We selected $H=48$, $Q_M = 80~\text{kWh}$, $\alpha = 0.9^{(1/48)}$, $\eta^+ = 0.9$, $\eta^- = 1.1$ \cite{Atzeni}, and $b(0) = 0.25Q_M$. In simulations, $\phi$ is selected such that $\phi_{\text{peak}}=1.5\times \phi_{\text{off-peak}}$, and peak period for the energy demand is taken as $16:00-23:00$. Here, $\phi_{\text{peak}}$ is selected such that the range of the predicted price signal, $(p_g(1), \dotsm, p_g(H))$, that the retailer pays to the grid is the same as the range of the time-of-use price signal in \cite{ausgrid}. $\delta_t$ is set as a constant across $\mathcal{T}$ such that the average predicted grid price is the same as the average reference price signal.

\subsection {Preliminary Study of the Proposed System}\label{sec:5-1}
Here, the behavior of the proposed system is compared with a baseline system without an SES where the users $\mathcal{P}$ trade energy only with the grid through the retailer. In the baseline, each user in $\mathcal{P}$ is charged with a price $p_g(t)$,  calculated as per \eqref{eq:id7_1}, for their unit grid energy consumption. 

In Fig.~\ref{fig:25MDPriceVar}, the price signal of the SES provider is equal to that of the retailer used to trade grid energy with the users $\mathcal{P}$ (see \eqref{eq:id16}). At midday, when there is excess PV energy, the SES provider and the retailer in the proposed system pay a higher price to purchase energy from the users $\mathcal{P}$ than in the baseline. During peak demand hours, when there is insufficient PV energy to supply users' energy demand, participation in the proposed system helps to decrease the energy buying price for the users, and the users $\mathcal{P}$ transfer some of their peak energy demand to the SES. In this way, the proposed system reduces the peak energy demand on the grid as illustrated in Fig.~\ref{fig:25MDLaodVar}, and the peak-to-average load ratio is reduced by 30.74\% compared to the baseline. Moreover, the proposed system can deliver nearly 10\% social cost reduction compared to the baseline.

As per Fig.~\ref{fig:25MDPriceVar}, when the users $\mathcal{P}$ have little PV energy (before 09:00 and after 16:00), the retailer sells grid energy to the users at a higher price than the grid price. At other times, the retailer buys energy from the users with a lower price than the grid price. Hence, with $25\%$ participating users, the system provides a cumulative payoff of 520 AU cents to the retailer. 
\begin{figure}[t!]
\centering
\includegraphics[width=0.75\columnwidth]{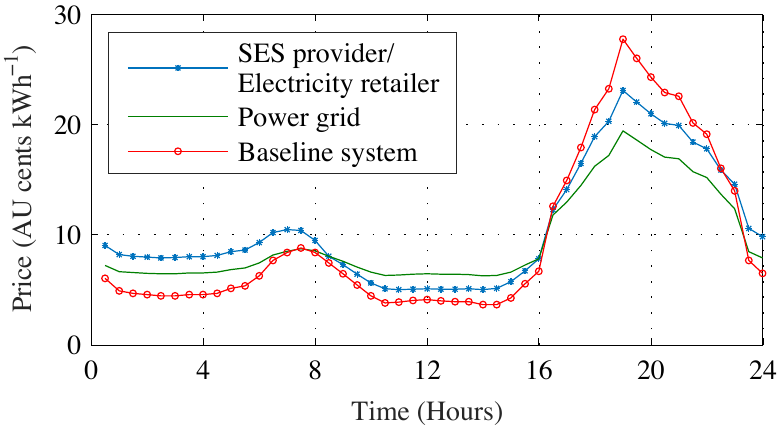}
\caption{Variation of the grid and SES electricity prices in the proposed system compared to the baseline with 25\% participating users (i.e.,10).}
\label{fig:25MDPriceVar}
\end{figure}
\begin{figure}[t!]
\centering
\includegraphics[width=0.75\columnwidth]{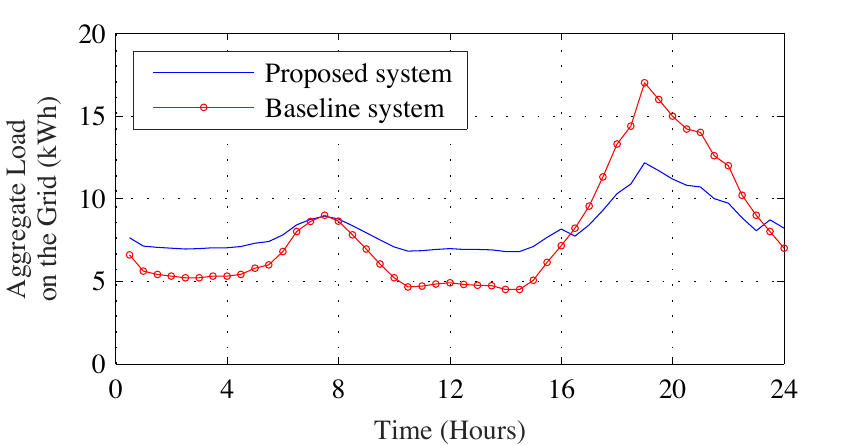}
\caption{Load variation on the grid with 25\% participating users.}
\label{fig:25MDLaodVar}
\end{figure}
\subsection{Participating Users versus Non-participating Users}
The primary incentive for a user to become a participating user is a reduction in their energy cost. Fig.~\ref{fig:NonPU_PU_Costs} depicts the variation of average daily energy costs for the users $\mathcal{P}$ and $\mathcal{N}$ as the number of users in $\mathcal{P}$ increases. The figure shows that the users $\mathcal{P}$ benefit more than the users $\mathcal{N}$ for all percentages of the users $\mathcal{P}$. When the number of users $\mathcal{P}$ increases, the greater demand on the SES leads to a greater price of the SES provider. Concurrently, the grid energy price for the users $\mathcal{A}$ increases because the SES price is equal to the price used by the retailer to trade grid energy with the users $\mathcal{A}$ (see Section~\ref{sec:4-2}). Hence, the average costs for the users $\mathcal{P}$ and $\mathcal{N}$ increase when the number of users $\mathcal{P}$ increases. However, users in $\mathcal{N}$ who pay $127~\text{AU cents}$ on average when there are $10\%$ participating users in the system only have to pay $121~\text{AU cents}$ on average by participating in the system when there are $90\%$ participating users. Hence, it is beneficial for a user in $\mathcal{N}$ to become a user in $\mathcal{P}$ in the proposed system.
\subsection{ Proposed System versus Competitive System }
Here, the performance of the proposed system is compared to the competitive energy trading system proposed in \cite{chathurika2} where the objective is to minimize individual energy costs of the users $\mathcal{P}$ and maximize the revenue for the SES provider. The solution in \cite{chathurika2} is incapable of producing a payoff to an intermediary electricity retailer because there is no price differentiation for grid energy traded by the participating users. In contrast, the proposed system produces a non-negative payoff to a retailer due to the price difference between the grid and the users $\mathcal{A}$ (see \eqref{eq:id20}) while minimizing the social cost in \eqref{eq:id8} and maximizing the revenue for the SES provider.

In Fig.~\ref{fig:CommCost}, the community cost is the sum of the energy costs of the three entities: the SES provider, the users $\mathcal{P}$, and the retailer. Since part of the cost incurred by the SES provider is a profit to the users $\mathcal{P}$ and part of the cost incurred by the users $\mathcal{P}$ produces a payoff to the retailer (see \eqref{eq:id9}, \eqref{eq:id20}, and \eqref{eq:id22}), the community cost in the proposed system is given by $\sum_{t\in \mathcal{T}}p_g(t)\Big( e_s(t)  + E_{\mathcal{A}}(t)\Big) - p_s(t) E_{\mathcal{N}}(t)$. Note that when there are 20\% participating users, energy discharged by the SES to the grid during peak hours leads to a lower total grid energy load, $\Big( e_s(t)  + E_{\mathcal{A}}(t)\Big)$, than that of the users $\mathcal{N}$, $E_{\mathcal{N}}(t)$, during peak hours. As a result, community cost becomes negative with 20\% participating users. In each case in Fig.~\ref{fig:CommCost}, the proposed system has a lower community cost than in the competitive system. The retailer can also benefit in the proposed system whereas the retailer in the competitive system does not receive a positive payoff. Hence, the proposed system provides a more economically viable solution for the energy trading scenario by creating a better trade-off of cost benefits among the SES provider, the users $\mathcal{P}$, and the retailer.
\begin{figure}[t!]
\centering
\includegraphics[width=0.75\columnwidth]{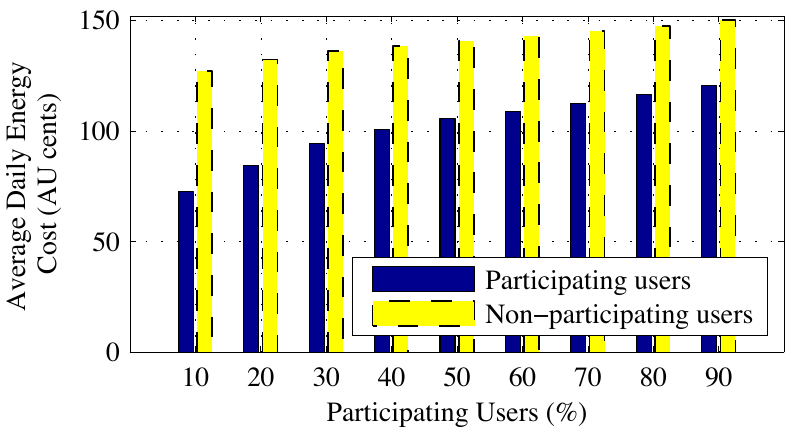}
\caption{Average daily energy costs of participating and non-participating users in the proposed system with different fractions of participating users.}
\label{fig:NonPU_PU_Costs}
\end{figure}
Similar to \cite{chathurika2}, simulation experiments revealed that, in the proposed system, users with more surplus energy in $\mathcal{P}$ pay a lower energy cost compared to users with less surplus energy.

In general, a higher peak-to-average load ratio reduction is preferable as it indicates a flattened load profile on the grid \cite{MohensianGT}. Fig.~\ref{fig:PAR_reduction} depicts when the fraction of participating users in the proposed system increases from $5\%$ to $100\%$, the peak-to-average load ratio reduction on the grid compared to the baseline improves from $28\%$ to $45\%$. The competitive system displays a similar trend because the equilibrium energy trading transactions (the SES and grid energy transactions of the users $\mathcal{P}$) of the two systems present a similar behavior. 
\begin{figure}[t!]
\centering
\includegraphics[width=0.8\columnwidth]{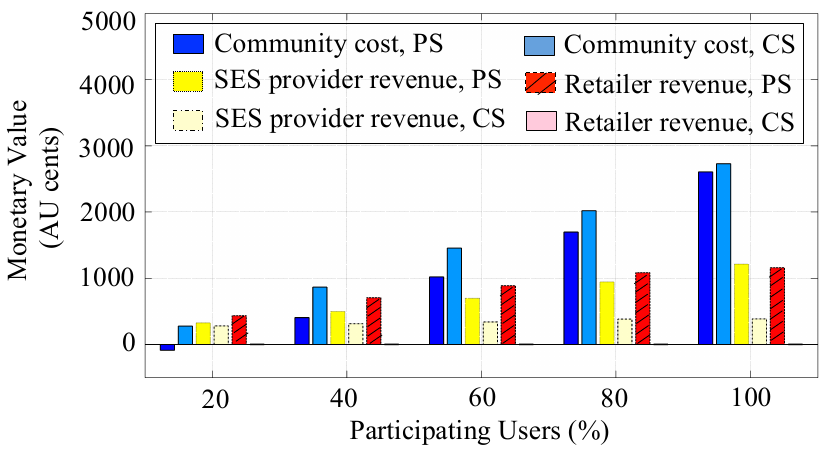}
\caption{Energy cost and revenue comparison with different fractions of participating users. Proposed system (PS), competitive system (CS).}
\label{fig:CommCost}
\end{figure}

\begin{figure}[t!]
\centering
\includegraphics[width=0.8\columnwidth]{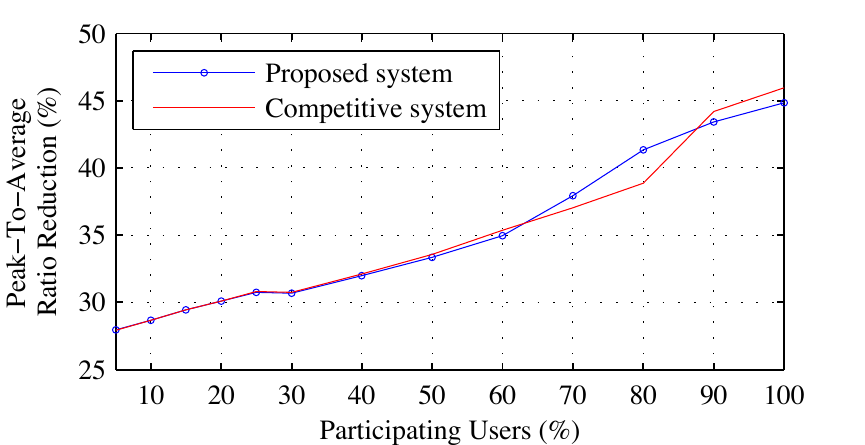}
\caption{Peak-to-average ratio reductions compared to the baseline system with different fractions of participating users.}
\label{fig:PAR_reduction}
\end{figure}
\subsection{ Imperfect Energy Forecasts and the Proposed System}
As a preliminary study in exploring the effects of imperfect energy forecasts on the system performance, proportional variance white noise errors were introduced to PV power generation and energy demand forecasts. When averaged over a large number of simulations, the mean absolute percentage error (MAPE) of energy forecasts is equal to half of percentage white noise variance \cite{chathurika2}. With 60\% participating users in the system, as MAPE increases from 0\% to 50\%, community cost increased from 1020 AU cents to 1319 AU cents, and for each 5\% increase of MAPE, community cost increased by nearly 15 AU cents. This is mainly because the retailer's revenue generated from the non-participating users' energy transactions reduced as MAPE increases. However, as MAPE changes from 0\% to 50\%, the average daily costs for the users $\mathcal{P}$ remained nearly unchanged, at 108 AU cents, with a variance less than 0.0002 AU $\text{cents}^2$. This demonstrates participating user costs are unaffected by the forecast errors. Similar trends were observed when changing the percentage of participating users from 20\% to 100\% in 10\% steps.

\section{Conclusion}\label{sec:6}
The energy trading interaction between a shared energy storage (SES) system and users with non-dispatchable energy generation for demand-side management of a neighborhood area network (NAN) can be studied by minimizing energy costs of users from a social planner's perspective. Here, we have investigated an energy trading system among an SES provider, users with non-dispatchable energy generation, and an electricity retailer for demand-side management of a NAN. By exploiting the insights of the Vickrey-Clarke-Groves mechanism and Stackelberg game theory, we have developed the system to minimize social energy cost for the users and maximize revenue for the SES provider. It has been shown that the proposed system possesses incentive-compatibility for the users by rewarding them in return for disclosing true energy usage information and provides benefits to the retailer in addition to the SES provider and the users.

Future research directions include investigating a stochastic model to incorporate energy forecast errors with the proposed system, extending the system to accommodate additional constraints including secure grid voltage levels with energy generation and demand fluctuations, and exploring a price differentiation strategy between participating and non-participating users to encourage participation in the system.


 \newcommand{\noop}[1]{}



\begin{IEEEbiography}
{Chathurika P. Mediwaththe} (S'12-M'17) received the B.Sc. degree (Hons.) in electrical and electronic engineering from the University of Peradeniya, Sri Lanka. She completed the PhD degree in electrical engineering at the University of New South Wales, Sydney, NSW, Australia in 2017. From 2013-2017, she was a research student with Data61-CSIRO (previously NICTA), Sydney, NSW, Australia. She is currently a research fellow with the Research School of Electrical, Energy and Materials Engineering (RSEEME) and the Battery Storage and Grid Integration Program, Australian National University, Australia. Her current research interests include electricity demand-side management, renewable energy integration into distribution power networks, game theory and optimization for resource allocation in distributed networks, and machine learning. 
\end{IEEEbiography}

\begin{IEEEbiography}
{Marnie Shaw} received the B.Sc. (Hons.) and PhD degrees, both from the School of Physics, University of Melbourne, Australia (2003). Her research interests lie in applying data analytics and machine learning to a range of data-rich problems, including the integration of renewable energy into the electricity grid. She has held postdoctoral research positions at Harvard University and the University of Heidelberg in Germany. She is currently Research Leader in the Battery Storage and Grid Integration Program, at the Australian National University, and convenor of the Energy Efficiency research cluster at the ANUÕs Energy Change Institute.
\end{IEEEbiography}

\begin{IEEEbiography}
{Saman Halgamuge} (F'17) is a Professor in the Department of Mechanical Engineering, School of Electrical, Mechanical and Infrastructure Engineering at the University of Melbourne, an honorary Professor of Australian National University (ANU) and an honorary member of ANU Energy Change Institute.  He was previously the Director of the Research School of Engineering at the Australian National University (2016-18), Professor, Associate Dean International, Associate Professor and Reader and Senior Lecturer at University of Melbourne (1997-2016).  He graduated with Dipl.-Ing and PhD degrees in Data Engineering from Technical University of Darmstadt, Germany and B.Sc. Engineering from University of Moratuwa, Sri Lanka. His research interests are in Machine Learning including Deep Learning, Big Data Analytics and Optimization and their applications in Energy, Mechatronics, Bioinformatics and Neuro-Engineering. His fundamental research contributions are in Big Data Analytics with Unsupervised and Near Unsupervised type learning as well as in Transparent Deep Learning and Bioinspired Optimization. 
\end{IEEEbiography}

\begin{IEEEbiography}
{David B. Smith}  received the B.E. degree in electrical engineering from the University of New South Wales, Sydney, NSW, Australia, in 1997, and the M.E. (research) and Ph.D. degrees in telecommunications engineering from the University of Technology, Sydney, NSW, Australia, in 2001 and 2004, respectively. Since 2004, he was with National Information and Communications Technology Australia (NICTA, incorporated into Data61 of CSIRO in 2016), and the Australian National University (ANU), Canberra, ACT, Australia, where he is currently a Principal Research Scientist with CSIRO Data61, and an Adjunct Fellow with ANU. He has a variety of industry experience in electrical and telecommunications engineering. His current research interests include wireless body area networks, game theory for distributed signal processing, disaster tolerant networks, 5G networks, IoT, distributed optimization for smart grid and privacy for networks.  He has published over 100 technical refereed papers. He has made various contributions to IEEE standardisation activity. He is an area editor for IET Smart Grid and has served on the technical program committees of several leading international conferences in the fields of communications and networks. Dr. Smith was the recipient of four conference Best Paper Awards.
\end{IEEEbiography}

\begin{IEEEbiography}
{Paul Scott} received BEng and BSc degree in 2010 at the ANU, and a PhD in Computer Science in 2016 at the ANU. He is now a research fellow in the Research School of Computer Science at the ANU, researching how optimisation and intelligent systems can enable the integration of distributed energy resources and renewable energy into our electricity networks.
\end{IEEEbiography}

\end{document}